\numberwithin{equation}{section}
\newtheorem{proposition}{Proposition}[section]
\newtheorem{example}{Example}[section]
\theoremstyle{definition}
\DeclareFontFamily{U}{MnSymbolC}{}
\DeclareSymbolFont{MnSyC}{U}{MnSymbolC}{m}{n}
\DeclareFontShape{U}{MnSymbolC}{m}{n}{
    <-6>  MnSymbolC5
   <6-7>  MnSymbolC6
   <7-8>  MnSymbolC7
   <8-9>  MnSymbolC8
   <9-10> MnSymbolC9
  <10-12> MnSymbolC10
  <12->   MnSymbolC12}{}
\DeclareMathSymbol{\intprod}{\mathbin}{MnSyC}{'270}
\newcommand{\ad}{\operatorname{ad}}
\pgfplotsset{compat=1.16}
\newcommand{\dt}{\mbox{d}}
\begin{document}

\title{\textbf{A comparative numerical study of stochastic Hamiltonian Camassa-Holm equations}}

\author{Darryl D. Holm\thanks{Department of Mathematics, Imperial College London} , Maneesh Kumar Singh\footnotemark[1] \footnote{Corresponding author: maneesh-kumar.singh@imperial.ac.uk} , and Oliver D. Street\thanks{Grantham Institute, Imperial College London} \\ 
\footnotesize
d.holm@ic.ac.uk, maneesh-kumar.singh@imperial.ac.uk, o.street18@imperial.ac.uk 
\\  \small
Keywords: Geometric mechanics; Stochastic parameterisations; Camassa--Holm;  peakons
}
\date{\today}

\maketitle




\maketitle



\begin{abstract}
We introduce a stochastic perturbation of the Camassa-Holm equation such that, unlike previous formulations, energy is conserved by the stochastic flow. We compare this to a complementary approach which preserves Casimirs of the Poisson bracket. Through an energy preserving numerical implementation of the model, we study the influence of noise on the well-known `peakon' formation behaviour of the solution. The energy conserving stochastic approach generates an ensemble of solutions which are spread around the deterministic Camassa-Holm solution, whereas the Casimir conserving alternative develops peakons which may propagate away from the deterministic solution more dramatically. 
\end{abstract}





\section{Introduction}\label{sec-1}

The word `\textit{peakon}' had not yet appeared in the literature as a scientific term before the publication of the paper \citep{camassa1993integrable}. However, the success of that paper has resulted in the words `peakon' and `Camassa--Holm (CH) equation' becoming common terms in the literature of emergent singular solutions of nonlinear wave equations.

An interest in stochastic equations governing the dynamics of a physical system, derived using the geometric structure of the corresponding deterministic model, has recently been developing. This interest has grown in popularity following the inclusion of transport noise in ideal fluid dynamics, developed by \cite{H2015} using a Lie group invariant variational principle. As illustrated in \citep{ST2023}, this approach is intimately related to the seminal work of \cite{B1981} on finite-dimensional stochastic Hamiltonian systems, the theory of which was later studied by \cite{LCO2008}. More specifically to this paper, these methods have also been applied to derive stochastic wave models \citep{crisan2018wave,S2023}. Several studies of the properties of a Stochastic Camassa-Holm (SCH) equation perturbed with transport noise have recently been undertaken. The formulation of the transport noise perturbation is called Stochastic Advection by Lie Transport (SALT) and is based on stochastic variational principles with Lie group symmetry. 

As shown in \cite{crisan2018wave}, wave breaking in the form of peakon emergence occurs with positive probability for the stochastic CH equation with SALT. It was conjectured in \cite{crisan2018wave} that the time-asymptotic solutions of their stochastic CH equation would comprise emergent wave trains of peakons moving along stochastic space-time paths. This was illustrated numerically in the paper \citep{BCH2021}, which presented a finite-element discretisation for this model, and explored the formation of peakons. The simulations presented in \cite{BCH2021} using this discretisation with the SALT approach revealed that peakons can still form in the presence of stochastic perturbations. Peakons can emerge both through wave breaking, as the slope turns vertical, and also without wave breaking as the inflection points of the velocity profile rise to meet at the summit. 

The transport noise form of the stochastic CH equation was also analysed in \cite{albeverio2021stochastic}, whose main result is the proof of the existence and uniqueness of local strong solutions in the Sobolev spaces $H^{2,q}$ for $ 1 < q < \infty$. As mentioned in \cite{albeverio2021stochastic} some results have also been shown in \cite{tang2018pathwise} and \cite{chen2021global} for a \emph{different class} of stochastic perturbations of the CH equation. Namely, the results in this other class are that pathwise solutions of the CH equation exist under perturbation by stochastic linear multiplicative forcing with It{\^o} noise, provided certain initial conditions are sufficiently smooth. However, those studies do not cover the transport noise case, where the diffusion coefficient is generated by an unbounded linear operator. The same may be said about \cite{chen2021global,chen2022stochastic}, although transport noise is mentioned for the stochastic two-component CH equation in an appendix of \cite{chen2023well}. Moreover, \cite{cruzeiro2017stochastic} use an alternative stochastic variational approach to derive a different stochastic Camassa-Holm equation. Finally, \cite{albeverio2021stochastic} re-emphasise the importance of studying the stochastic CH equation because of its geometric and physical motivations, and its relevance in nonlinear wave theory and geophysical applications.

\subsection{Objectives of the paper}
The primary objective of this paper is the comparison of solutions of \emph{stochastic} Camassa--Holm (SCH) equations for two different structure-preserving types of stochasticity. The development of structure-preserving stochastic models of Hamiltonian fluid dynamics may be split into two approaches, commonly referred to in the literature as Stochastic Forcing by Lie Transport (SFLT) and Stochastic Advection by Lie Transport (SALT). Stochastic Option 1 (SFLT) preserves the energy (Hamiltonian), whilst stochasticity perturbing the Poisson operator. Stochastic Option 2 (SALT) preserves the Poisson operator, while it introduces stochasticity into the energy Hamiltonian. The comparison between these two alternatives is based on how introducing either option affects the coherent structures of the deterministic CH equation; namely, its solitary wave solutions known as `peakons' for the peaked shape of their profiles. Comparing the SCH solution behaviour for these two choices of stochasticity reveals the roles of the energy Hamiltonian and the Poisson operator. Said briefly, the metric in the kinetic-energy Hamiltonian is responsible for the shape of the coherent structures in the solitary wave solution (peakons) and the Poisson operator is responsible for creating these coherent structures.

\section{The Model}\label{sec-2}

\subsection{Physical derivation of the deterministic Camassa-Holm equation}\label{sec-2.1}

An asymptotic expansion in the dimension-free scaling parameters $\epsilon_1^2<\epsilon_2<\epsilon_1\ll1$ of the free-surface Euler fluid equation for long wave-length, small-amplitude nonlinear water waves of elevation, $\eta$, yields \citep{dullin2004asymptotically}
\begin{align*}
0 &= \eta_t+\eta_x+\frac{3}{2}\epsilon_1\,\eta\, \eta_x
+\frac{1}{6}\epsilon_2\,\eta_{xxx}
-\frac{3}{8}\epsilon_1^2\,\eta^2\,\eta_x\\
      & 
+\,\epsilon_1\epsilon_2\left(\frac{23}{24}\eta_x\,\eta_{xx}
+ \frac{5}{12}\eta\,\eta_{xxx}\right) + 
\epsilon_2^2 \frac{19}{360}\eta_{xxxxx} + o(\epsilon_1^2,\epsilon_2)
\,.\end{align*}
At this order of asymptotic expansion, certain choices of the three free constant parameters $\alpha_1$, $\alpha_2$, and $\beta$ in the (non-local) Kodama normal form transformation \citep{kodama1997obstacles},   
\begin{align*}
    \eta = u + \epsilon_1\left( \alpha_1 u^2 + \alpha_2 u_x \partial_x^{-1}
u\right) + \epsilon_2 \,\beta u_{xx}
\,,
\end{align*}
result in the deterministic CH equation, expressed as \citep{camassa1993integrable}
\begin{equation}\label{CHe1}
    u_t - \alpha^2u_{txx} = - 3uu_{x} + 2\alpha^2u_{x}u_{xx} + \alpha^2 uu_{xxx} \,,
\end{equation}
or, equivalently,
\begin{equation}\label{CH-mom}
    m_t = - (\partial m + m \partial)u \quad\hbox{where}\quad m := (1-\alpha^2\partial^2)u
\,.\end{equation}

The `\textit{hydrodynamic}' form of the CH equation is \citep{camassa1993integrable}
\begin{align}
\begin{split}
u_{t}+uu_{x} & =-\,\partial_{x}\int_{\mathbb{R}}
\frac{1}{2}\,e^{-|x-y|/\alpha}
\left(
u^{2}(y,t)+\frac{1}{2}u_{y}^{2}(y,t)\right) \,dy
\\&=:-\, K\ast \Big(u^{2}+\frac{1}{2}u_{x}^{2}\Big)_x
\,.\label{GeoSprayCH}
\end{split}
\end{align}
Here, the \emph{peaked} kernal $K(|x-y|) = \tfrac{1}{2}\exp (-|x-y|/\alpha)$ in the convolution integral in \eqref{GeoSprayCH} is the Green function for the Helmholz operator $(1-\alpha^2\partial^2)$ appearing in equation \eqref{CH-mom}. Solutions of equation \eqref{GeoSprayCH} determine the vector field $u= \dot{g}_t {g}_t^{-1}$ which generates the smooth invertible flow $g_t\in {\rm Diff}(S^1)$ governing the solution of the CH equation \eqref{GeoSprayCH}. The geometric content of this statement may be illustrated by writing the CH equation in \eqref{CH-mom} equivalently in terms of the tangent to the push-forward ${g_t}_*$ of the right action of $g_t$ on the 1-form density $m(x,t) dx\otimes dx$. Namely, 
\begin{equation}\label{CH-mom2}
 \frac{d}{dt}{g_t}_*\big(m(x,t) dx\otimes dx\big) = - \,{g_t}_* {\cal L}_{\dot{g}_t {g}_t^{-1}}(m dx\otimes dx)\,.
\end{equation}
This tangent relation also defines the Lie derivative ${\cal L}_u$ along the characteristic curves of the vector field $u= \dot{g}_t {g}_t^{-1}$. Upon evaluating ${g}_t$ in \eqref{CH-mom2} at the identity, ${g}_t|_{t=0}={Id}$, the CH equation may finally be written geometrically as, 
\begin{equation}\label{CH-mom3}
\partial_t \big(m(x,t) \,dx\otimes dx\big) = -\,{\cal L}_{u}\big(m(x,t) \,dx\otimes dx\big)
\,.
\end{equation}

\subsubsection*{Solitary wave solutions}
The CH equation \eqref{CHe1} is known for its solitary wave solutions. 
The deterministic CH equation produces  
trains of peaked solitary wave solutions called `peakons', given by the formula
\begin{equation}  \label{peakontrain-soln-u}
u(x,t)=\frac12\sum_{a=1}^Mp_a(t)\mathrm{e}^{-|x-q_a(t)|/\alpha}  
= \sum_{a=1}^M p_a(t)K(x-q_a(t))
\,.\end{equation}
Since $u=K*m$, the formula \eqref{peakontrain-soln-u} for the peakon solitary wave solution $u(x,t)$ implies the following travelling wave formula for $m(x,t)$, as 
\begin{equation}  \label{peakontrain-soln-m}
m(x,t)= \sum_{a=1}^Mp_a(t)\delta \big(x-q_a(t)\big)  
\,,\end{equation}
in which the $\delta$-function represents \emph{singular} solitary waves. 

Wave trains of peakons emerge in finite time from any smooth confined symmetric initial profile for the velocity.  Hence, these wave trains track the emergence in finite time of the singular solutions appearing in \eqref{peakontrain-soln-m}. The peakon wave trains interact with each other nonlinearly, as shown in Fig. \eqref{Figure1} below.
\begin{figure}[h!]
\begin{center}
\includegraphics[width=0.7\textwidth]{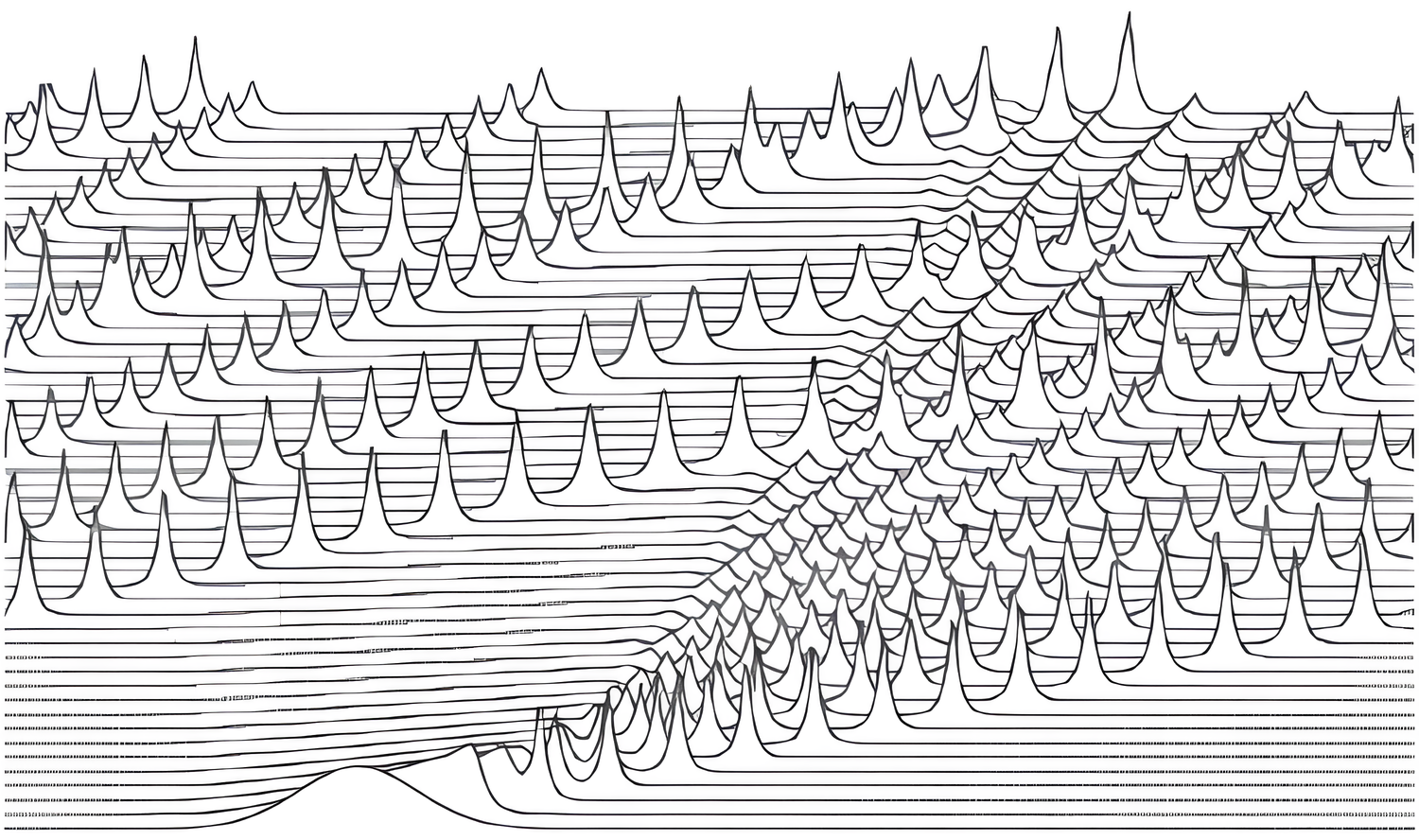}
\end{center}
\caption{\label{Figure1} Peakons emerging from a smooth confined symmetric initial condition.}
\end{figure}
Given the extent of the peakon literature now and the need to move on to our paper's primary objectives, the introduction to the remarkable solution properties of the deterministic CH equation must be brief. For reviews of the deterministic peakon literature, see, e.g., \cite{holm2006peakons,lundmark2022view} and references therein.

\subsection{A Hamiltonian structure for the deterministic Camassa-Holm equation}

To obtain a Hamiltonian structure of the CH equation, we introduce the momentum variable, $m  =u- \alpha^2 u_{xx}$. Note that, formally, the configuration space is the group of diffeomorphisms of the circle ${\rm Diff}(S^1)$, the algebra is then the space of one dimensional vector fields $\mathfrak{X}(S^1)$ and its dual, $\mathfrak{X}^*(S^1)$, containing the momentum, is the space of 1-form densities. Thus, the momentum variable is equipped with a basis, $m dx\otimes dx$. For brevity in notation, the pairing between elements $u\in \mathfrak{X}(S^1)$ and its dual $m\in \mathfrak{X}^*(S^1)$ is written as $\langle m, u \rangle_{\mathfrak{X}^*\times\mathfrak{X}} = \int_{S^1} mu \,dx$.

Expressed in terms of $m$, one of the Hamiltonian structures of the CH equation is given by
\begin{align}\label{eqn:CH_ham}
    m_t=D_m \frac{\delta H}{\delta m} \,,
\end{align}
where the Hamiltonian $H$ and operator $D_m$ are 
\begin{align}
H & =\frac{1}{2} \int_{S^1} u^2+\alpha^2 u_x^2 \,d x=\frac{1}{2} \int_{S^1} u m \,d x 
\,,\\
D_m & =-(m \partial+\partial m)
\,,\quad
\frac{\delta H}{\delta m} = u
\,.
\end{align}
To compute the required derivative of the Hamiltonian, recall that the chain rule for functional derivatives of $F(u,m)$ leads to
\begin{align*}
\frac{\delta F}{\delta m}=\frac{\partial u}{\partial m} \frac{\delta F}{\delta u}=\left(1-\alpha^2 \partial^2\right)^{-1} \frac{\delta F}{\delta u}\,.
\end{align*}

Thus, from the Hamiltonian function $H$ together with the chain rule, we get
\begin{align*}
\frac{\delta H}{\delta m} & =\left(1-\alpha^2 \partial^2\right)^{-1} \frac{\delta H}{\delta u} = \left(1-\alpha^2 \partial^2\right)^{-1} (u - \alpha^2 u_{xx})  =\left(1-\alpha^2\partial^2\right)^{-1}\left(1-\alpha^2\partial^2\right) u=u\,.
\end{align*}
It can then be immediately verified that equation \eqref{eqn:CH_ham} is indeed the CH equation \eqref{CH-mom3} since we have
\begin{align*}
m_t & =-\left(m \partial+\partial m\right) u=-\left(m u_x+(mu)_x\right) \\
& =-3 u u_x+2 \alpha^2 u_x u_{x x}+\alpha^2 u u_{x x x} \,.
\end{align*}

The choice of Hamiltonian structure here has been made since it is the \emph{Lie-Poisson} structure. That is, the Poisson bracket for the system is the Lie-Poisson bracket where the algebra is taken to be the space of one-dimensional vector fields equipped with the commutator. As such, a function of the momentum $f \in C^\infty(\mathfrak{X}^*(\mathbb{R}))$ evolves according to
\begin{equation}\label{eqn:LPB_CH}
\begin{aligned}
    \frac{df}{dt} = \{f,h\} &:= -\left\langle m  , \left[ \frac{\delta f}{\delta m} , \frac{\delta h}{\delta m} \right] \right\rangle_{\mathfrak{X}^*\times\mathfrak{X}} = -\int_{S^1} m \left( \frac{\delta f}{\delta m} \frac{\partial}{\partial x} \frac{\delta h}{\delta m} - \frac{\delta h}{\delta m} \frac{\partial}{\partial x} \frac{\delta f}{\delta m} \right) \,dx
    \\
    &= -\int_{S^1} \frac{\delta f}{\delta m}\left( m\frac{\partial}{\partial x}\frac{\delta h}{\delta m} + \frac{\partial}{\partial x}\left(m\frac{\delta h}{\delta m}\right) \right) \,dx = -\int_{S^1}\frac{\delta f}{\delta m}D_m\frac{\delta h}{\delta m} \,dx \,.
\end{aligned}
\end{equation}
Notice that the above calculation demonstrates that the Hamiltonian operator $D_m$ in fact denotes the Lie derivative, $\mathcal{L}_{\Box}m$, or coadjoint representation, $\ad^*_{\Box}m$, for the action of vector fields on 1-form densities. We have also demonstrated that the Camassa--Holm equation \eqref{eqn:CH_ham} can be written equivalently in Lie--Poisson bracket notation \citep{holm1998euler} as $m_t = \{ m , h \}$. 

\subsection{Derivation of the stochastic Camassa-Holm equations}\label{sec-2.2}

Structure-preserving stochastic perturbations of fluid equations can be made by exploiting this geometric structure. Since this structure is Lie-Poisson, these stochastic perturbations will be analogous to the stochastic advection/forcing by Lie transport (SALT/SFLT) approaches discussed in \citep{H2015} and \citep{HH2021}, respectively.

\subsubsection{SFLT: An energy preserving stochastic perturbation}\label{sec-2.3}

An energy-preserving stochastic perturbation of the CH equation is introduced using the $D_m$ operator, as follows
$$
\begin{aligned}
\dt m & = D_{m\,dt + \sum_k f_k\circ dW^{k}_t} \dfrac{\delta H}{\delta m} := D_m \dfrac{\delta H}{\delta m} d t+ \sum_{k=1}^K D_{{f}_k} \dfrac{\delta H}{\delta m} \circ d W^{k}_t \\
& =-(m \partial+\partial m) u d t- \sum_{k=1}^K\left({f}_k \partial+\partial {f}_k\right) u \circ d W^{k}_t \\
& =\left(-3 u u_x+2 u_x u_{x x}+u u_{x x x}\right) d t- \sum_{k=1}^K \left(2 {f}_k u_x+{{f}_{k}}_{x} u\right) \circ d W^{k}_t \,,
\end{aligned}
$$
where $W^{k}_t$ is a Brownian motion in time and $\circ$ denotes Stratonovich integration. A slight abuse of notation allows us to write
$$
\dt m=-\left(\left(m d t+{f}_k \circ d W^{k}_t \right) u_x+\left(\left(m d t+{f}_k \circ d W^{k}_t\right) u\right)_x\right).
$$
Henceforth, we express the stochastic Camassa-Holm equation (SFLT type) as follows:
\begin{equation}\label{SFLTe1}
 \begin{array}{ll}
 m = u - \alpha^2 \partial_{xx}u, \\
\dt m + \left(\widetilde{m} u_x + (\widetilde{m} u )_x \right ) = 0,
\end{array}  
\end{equation}
where the perturbed momentum due to SFLT noise is 
\[
\widetilde{m} = m  \dt t + \sum_{k=1}^{K} {f}^{k}(x) \circ dW^{k}_t \,.
\]
Details of the spatially modulated amplitudes ${f_k}$ will be declared in the numerical section. The model \eqref{SFLTe1} describes the evolution of velocity $u$ in time $t \in [0,T)$ on the spatial domain $[0,L_d]$ with periodic boundary conditions. 

Next, we will prove that the resulting stochastic CH equation is energy-preserving. Define the energy functional as 
\[
E = H = \int_{0}^{L_d} \left(\dfrac{1}{2}u^2 + \dfrac{\alpha^2}{2} u^{2}_{x}\right)  \dt x \,.
\]
\begin{proposition}\label{Ctsenergythm}
    The energy, $E$, is preserved by the flow of \eqref{SFLTe1}.
\end{proposition}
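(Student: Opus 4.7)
The plan is to apply the Stratonovich chain rule directly to $E$, use the SFLT equation to substitute for $dm$, and then exploit the skew-symmetry of the operator $D_g := -(g\partial + \partial g)$ tested against $u$ on the periodic domain.

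First I would rewrite $E$ in its Hamiltonian form $E = H = \frac{1}{2}\int_0^{L_d} u m\, dx$. Because Stratonovich calculus obeys the ordinary chain rule, differentiating along the flow gives
\begin{equation*}
dE \;=\; \int_0^{L_d} \frac{\delta H}{\delta m}\, dm \, dx \;=\; \int_0^{L_d} u\, dm\, dx,
\end{equation*}
where I used $\delta H/\delta m = u$ as established earlier in the paper. (Equivalently, one can start from the $(u,u_x)$ form of $E$, apply the chain rule to get $\int(u\,du + \alpha^2 u_x\,du_x)\,dx$, and integrate by parts using periodicity to recover $\int u\,dm\,dx$.)

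Next I would substitute the SFLT equation \eqref{SFLTe1}, which I will read as
\begin{equation*}
dm \;=\; -\,(m u_x + (mu)_x)\,dt \;-\; \sum_{k=1}^K \bigl(f_k u_x + (f_k u)_x\bigr)\circ dW^k_t.
\end{equation*}
This yields a drift contribution and one Stratonovich-noise contribution to $dE$, each of the form $-\int_0^{L_d} u\bigl(g u_x + (g u)_x\bigr)\,dx$ with $g = m$ for the drift and $g = f_k$ for the $k$-th noise term. The key algebraic identity is
\begin{equation*}
u\bigl(g u_x + (g u)_x\bigr) \;=\; \tfrac{1}{2} g (u^2)_x + u^2 g_x + \tfrac{1}{2} g (u^2)_x \;=\; (g u^2)_x,
\end{equation*}
so each contribution integrates to zero on $[0,L_d]$ by the periodic boundary conditions. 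Summing over all drift and noise terms gives $dE = 0$ identically, which is the desired conservation.

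There is no real obstacle here: the point is that $\int_0^{L_d} u\, D_g u\, dx = 0$ for every smooth periodic $g$, i.e.\ $D_g$ is formally skew when paired with $u$ on both sides, which is exactly the structural property that makes SFLT energy-preserving. The use of Stratonovich noise is essential, since it is what allows the ordinary chain rule on $E$ to be applied; I would briefly note this to justify the first display.
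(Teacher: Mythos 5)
Your proof is correct, and each step checks out: the Stratonovich chain rule legitimately gives $dE = \int_0^{L_d} u\, dm\, dx$, you read the SFLT momentum equation \eqref{SFLTe1} correctly, and the pointwise identity $u\bigl(g u_x + (gu)_x\bigr) = (g u^2)_x$ is verified easily, so each drift and noise contribution vanishes by periodicity. However, your route is genuinely more computational than the paper's, which never performs an integration by parts at all: the paper writes the evolution abstractly as $dE = \{E,H\}\,dt + \sum_{k=1}^K \{E,H\}_{f_k}\circ dW^k_t$, where $\{\cdot,\cdot\}_{f_k}$ is the Lie--Poisson bracket ``frozen'' at $f_k$, and concludes in one line from the antisymmetry of any (frozen) Poisson bracket together with $E=H$. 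Your perfect-derivative identity is precisely the concrete verification of that antisymmetry in the special case where both functional derivatives equal $u$; in other words, you prove the structural fact that the paper merely cites. What the abstract route buys is generality and a clearer view of the mechanism: it shows that energy conservation holds for \emph{any} Hamiltonian and any choice of the amplitudes $f_k$, because SFLT perturbs only the Poisson operator and $\{H,H\}_g = 0$ for every frozen bracket. What your route buys is self-containedness and an explicit flux form: the total derivative $(g u^2)_x$ makes visible exactly where periodicity enters, which the bracket argument leaves implicit, and it requires no prior knowledge that the frozen bracket is antisymmetric. One caveat worth a sentence in your write-up: in the drift term you apply the identity with $g = m$, and for peakon solutions $m$ is measure-valued, so that step is formal --- though the paper's own proof is formal to exactly the same degree, so this is not a gap relative to the paper's standard.
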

\begin{proof}
    Following equation \eqref{eqn:LPB_CH}, the Lie-Poisson bracket can be written in terms of the operator $D_{\Box}$ and the energy, as a function of momentum, evolves according to
    \begin{align*}
        dE &= -\int_{S^1}\frac{\delta f}{\delta m}D_m\frac{\delta h}{\delta m} \,dx\,dt - \sum_{k=1}^K\int_{S^1}\frac{\delta f}{\delta m}D_{{f}_k}\frac{\delta h}{\delta m} \,dx\circ dW^{k}_t
        \\
        &=: \{ E , H \} \,dt + \sum_{k=1}^K \{ E , K \}_{{f}_k}\circ dW^{k}_t  \,,
    \end{align*}
    where $\{ \cdot , \cdot \}_{{f}_k}$ is the `frozen' Lie-Poisson at ${f}_k$. Since these are both Poisson brackets and $E=H$, the result follows. 
\end{proof}

\subsubsection{SALT: A `Poisson structure preserving' stochastic perturbation}

The alternative option is to incorporate stochasticity into the Hamiltonian, rather than in the Hamiltonian operator. That is, we introduce a Hamiltonian for the stochastic component of the flow
\begin{equation*}
    H_k = \int_{S^1}m \xi_k\,dx \,,
\end{equation*}
and consider the equation
$$
\begin{aligned}
\dt m & = D_m \left( u\,dt + \sum_{k=1}^K \xi_k\circ dW^{k}_t \right) := D_m \dfrac{\delta H}{\delta m} d t+ \sum_{k=1}^K D_m \dfrac{\delta H_k}{\delta m} \circ d W^{k}_t \\
& =-(m \partial+\partial m) u d t- \sum_{k=1}^K\left(m \partial+\partial m\right) \xi_k \circ d W^{k}_t \\
& =\left(-3 u u_x+2 u_x u_{x x}+u u_{x x x}\right) d t- \sum_{k=1}^K \left(2 m{\xi_k}_x + m_x\xi_k\right) \circ d W^{k}_t \,.
\end{aligned}
$$
This approach to stochastically perturbing the CH equation, studied by \cite{BCH2021}, is analogous to the `SALT' approach to 
Casimir-preserving stochastic perturbations of physical fluid models \citep{H2015}. The choice of the Hamiltonians $H_k$ reflects the fact that this approach was first developed on the Lagrangian framework, and corresponds to a perturbation of material transport. The `SALT' approach ensures that the stochastic model preserves invariants which arise as a result of being Casimirs of the Poisson bracket and, moreover, ensures that the dynamics remains on the coadjoint orbit of the Lie co-algebra. As such, it maintains certain properties of the flow which correspond to the bracket structure. For a full discussion of this, see \cite{ST2023}.

As with the energy-preserving stochastic model \eqref{SFLTe1}, we will express the stochastic Camassa-Holm equation (SALT type) in a simplified notation as
\begin{equation}\label{SALTe1}
 \begin{array}{ll}
 m = u - \alpha^2 \partial_{xx}u, \\
\dt m + \left(m \tilde{u}_x + (m \tilde{u} )_x \right ) = 0,
\end{array}  
\end{equation}
where the perturbed velocity due to SALT noise is 
\[
\tilde{u} = u  \dt t + \sum_{k=1}^{K} \xi^{k}(x) \circ dW^{k}_t \,,
\]
where, as for the previous equation, the amplitudes $\xi_k$ will be taken to be sine/cosine functions, declared in the numerical section.

\section{Numerical Approximation}\label{sec-2.4}
In this section, we will establish numerical schemes of the SCH equations of both type \eqref{SFLTe1} and \eqref{SALTe1}. Also, we briefly mention the energy-preserving property of the numerical method.

\subsection{SFLT model}\label{sec-2.3.1}
To study the numerical simulation, we follow the method of lines approach. Firstly, we discretise continuous models in the spatial variables followed by the temporal variable. 
For spatial discretisation, we will consider a  continuous Galerkin finite element technique on a uniform mesh of
the interval $I=[0,L_d]$ with $N$ cells of width $h=L_d /N$. The respective finite element space, \emph{i.e.}, the continuous Galerkin space $CG_1$ on the mesh is denoted as $V_h$. One can choose a higher-order spatial discretisation, but while keeping in mind the convergence of stochastic PDEs, we have restricted ourself to the former case. The semidiscrete numerical scheme seeks
$m_{h}(t)\in V_h$ and $u_{h}(t)\in V_h$ such that
\begin{equation}\label{SFLTde1}
\begin{array}{ll}
	(u_h ,\psi_h) + \alpha^2 ( \partial_{x}u_h,\partial_{x}\psi_h) - (m_h,\psi_h) =0 \,, \quad \forall \psi_h \in V_{h} \,,\\[4pt]
	 (\dt m_h, \phi_h) + (\widetilde{m}_h \partial_{x}u_h, \phi_h) - (\widetilde{m}_h u_h, \partial_{x}\phi_h)=0 \,, \quad \forall \phi_h \in V_{h} \,,
\end{array}
\end{equation}
where the perturbed momentum is $\widetilde{m}_h$ is defined by replacing  $m$ with $m_h$ and is given by
\[
\widetilde{m}_h = \left(m_h  \dt t + \sum_{k=1}^{K} {f}^{k}\circ dW^{k}_t \right) \,.
\]

For the time discretisation, we select a uniform time step $\Delta t = T/M$ and $t_{n} = n \Delta t, \,\,n=1,2,\ldots, N_{T}$, and solve for 
$m^n\approx m(x,t_n)$ and $u^n\approx u(x, t_n)$. Note that, 
$\Delta W^{n,k}$ is a normally distributed  with  mean zero and variance $\Delta t$. Then we use $\Delta W^{n,k}$ in an implicit midpoint rule discretisation (leading to a Stratonovich method since $\Delta W^{n,k}$ is multiplied by $u^{n+1/2}_h$, not $u^n_h$).

For the fully discrete scheme, one must find $m^{n}_h, u^n_h \in V_h$ which solves 
\begin{equation}\label{SFLTde2}
\begin{array}{ll}
	(u^{n+1}_h,\psi_h) + \alpha^2 ( \partial_{x}u^{n+1}_h,\partial_{x}\psi_h) - (m^{n+1}_h,\psi_h) =0 \,, \quad \forall \psi_h \in V_{h} \,,\\[4pt]
	 (m^{n+1}_h -m^{n}_h , \phi_h)  
+  (\widetilde{m}^{n+1/2}_{h}\partial_{x}u^{n+1/2}_h, \phi_h)  -  (\widetilde{m}^{n+1/2}_h u^{n+1/2}_h , \partial_{x}\phi_h)=0 \,, \quad \forall \phi_h \in V_{h} \,,
\end{array}
\end{equation}
where $\widetilde{m}^{n+1/2}_h$ is given as
\[
\widetilde{m}^{n+1/2}_h = \left(m^{n+1/2}_h  \Delta t +  \sum_{k=1}^{K} {f}^{k}_h \Delta W^{n, k} \right) \,,\quad\hbox{with}\quad m^{n+1/2}_h = \frac{1}{2}\left(m^{n+1}_h + m^{n}_h \right) \,. 
\]
Note that, if the system is deterministic, there are no ${f}^{k}$ functions that are non-zero, and the equations reduce so that $\widetilde{m}^{n+1/2}_h = m^{n+1/2}_h  \Delta t$.

Similar to the deterministic case, we show that the numerical scheme \eqref{SFLTde2} is energy-preserving. Let us define the discrete energy functional $ E^{n+1}$ by 
\begin{equation}\label{Denergy}
    E^{n+1} = \dfrac{1}{2} \int_{0}^{L_d} \left((u^{n+1}_h)^2 + \alpha^2 (\partial_{x}u^{n+1}_h)^2\right) \dt x \,.
\end{equation}

\begin{proposition}\label{Disenergythm}
For $E$ defined in equation \eqref{Denergy}, we have that $\dfrac{1}{\Delta t}(E^{n+1}-E^{n})=0$ where $u$ evolves according to the scheme \eqref{SFLTde2}.
\end{proposition}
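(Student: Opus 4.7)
The plan is to adapt the proof of Proposition \ref{Ctsenergythm} to the discrete setting. Conservation at the continuous level came from the antisymmetry of $D_m$ under pairing with $u = \delta H/\delta m$; the scheme \eqref{SFLTde2} is designed so that this antisymmetry survives exactly at each time step, for every realisation of the Brownian increments.

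First I would rewrite $E^{n+1} - E^n$ using the identity $a^2 - b^2 = (a+b)(a-b)$, giving
\[
E^{n+1} - E^n = \int_{0}^{L_d} \Bigl( u^{n+1/2}_h(u^{n+1}_h - u^n_h) + \alpha^2\, \partial_x u^{n+1/2}_h\, \partial_x(u^{n+1}_h - u^n_h) \Bigr)\,\dt x.
\]
Next, I would subtract the first (elliptic) equation of \eqref{SFLTde2} at step $n$ from the same equation at step $n+1$ and test the result with $\psi_h = u^{n+1/2}_h \in V_h$, which is admissible since $V_h$ is a linear space. This identifies the right-hand side above as $(m^{n+1}_h - m^n_h,\, u^{n+1/2}_h)$, so that
\[
E^{n+1} - E^n = (m^{n+1}_h - m^n_h,\, u^{n+1/2}_h).
\]

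The key step is then to test the second (momentum) equation of \eqref{SFLTde2} with the choice $\phi_h = u^{n+1/2}_h$. The two transport terms become
\[
(\widetilde{m}^{n+1/2}_h \partial_x u^{n+1/2}_h,\, u^{n+1/2}_h) - (\widetilde{m}^{n+1/2}_h u^{n+1/2}_h,\, \partial_x u^{n+1/2}_h) = 0,
\]
since both integrals equal $\int_0^{L_d} \widetilde{m}^{n+1/2}_h u^{n+1/2}_h\, \partial_x u^{n+1/2}_h\,\dt x$ and cancel identically. This yields $(m^{n+1}_h - m^n_h,\, u^{n+1/2}_h)=0$, and combined with the previous paragraph delivers $E^{n+1}-E^n=0$.

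No step is genuinely hard; what makes the argument go through is the careful combination of three ingredients: the skew-symmetric split $(\widetilde{m} u_x,\phi) - (\widetilde{m} u,\phi_x)$ that encodes the antisymmetry of $D_m$ at the discrete level, the implicit midpoint rule, which makes $u^{n+1/2}_h$ the natural test function and converts differences of squares into increments via $(a-b)(a+b)$, and the fact that $V_h$ is a linear space so that $u^{n+1/2}_h$ indeed lies in it. Finally, because the argument treats each $\Delta W^{n,k}$ merely as a scalar multiplying the deterministic spatial amplitude $f^k_h$ inside $\widetilde{m}^{n+1/2}_h$, the conclusion holds pathwise and not merely in expectation, exactly as in the continuous Proposition \ref{Ctsenergythm}.
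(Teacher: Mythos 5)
Your proposal is correct and follows essentially the same route as the paper's proof: rewrite $E^{n+1}-E^{n}$ via the difference-of-squares identity with $u^{n+1/2}_h$, identify the result as $(m^{n+1}_h-m^{n}_h,\,u^{n+1/2}_h)$ through the elliptic equation, and then test the momentum equation with $\phi_h=u^{n+1/2}_h$ so that the two transport terms cancel identically. If anything, your step of subtracting the weak elliptic equations at steps $n$ and $n+1$ makes explicit what the paper compresses into ``by using integration by parts and the form of the discrete scheme,'' and your remark that the cancellation holds pathwise is consistent with the paper's argument.
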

\begin{proof} From the discrete scheme \eqref{SFLTde2}, we have
\begin{align*}
    \dfrac{1}{\Delta t}(E^{n+1}-E^{n}) & = \int_{0}^{L_d}  \left( \dfrac{\big(u^{n+1}_h - u^{n}_h\big)}{\Delta t} u^{n+1/2}_h + \alpha^2 \dfrac{\big(\partial_{x}u^{n+1}_h - \partial_{x}u^{n}_h\big)}{\Delta t} \partial_{x} u^{n+1/2}_h \right) \dt x \\[12pt]
    & = \int_{0}^{L_d} u^{n+1/2}_h \dfrac{\big(m^{n+1}_h - m^{n}_h\big)}{\Delta t} \dt x \\[12pt]
    & = \int_{0}^{L_d} \left(- u^{n+1/2}_h \widetilde{m}^{n+1/2}_{h}\partial_{x}u^{n+1/2}_h +  \widetilde{m}^{n+1/2}_h u^{n+1/2}_h  \partial_{x}u^{n+1/2}_h  \right)\dt x = 0 \,. 
\end{align*}
By using integration by parts and the form of the discrete scheme \eqref{SFLTde2}, we find the last two equalities.
\end{proof}

\subsection{SALT model}\label{sec-2.3.2}
For the discretisation of the SALT model \eqref{SALTe1}, we follow the numerical implementation discussed in \cite{cottercrisansingh_STUOD}. For the sake of completeness, we shall introduce this numerical scheme briefly. Find $m^{n}_h , u^n_h \in V_h$ such that
\begin{equation}\label{SALTde1}
\begin{array}{ll}
	(u^{n+1}_h,\psi_h) + \alpha^2 ( \partial_{x}u^{n+1}_h,\partial_{x}\psi_h) - (m^{n+1}_h,\psi_h) =0 \,, \quad \forall \psi_h \in V_{h} \,,\\[4pt]
	 (m^{n+1}_h-m^{n}_h, \phi_h)  + (m^{n+1/2}_h\partial_{x}\tilde{u}^{n+1/2}_h, \phi_h) - (m^{n+1/2}_h \tilde{u}^{n+1/2}_h \,, \partial_{x}\phi_h)=0 \,, \quad \forall \phi_h \in V_{h}\,,
\end{array}
\end{equation}
where the perturbed term $\tilde{u}^{n+1/2}_h$ is given by
\[
\tilde{u}^{n+1/2}_h = \left(u^{n+1/2}_h \Delta t + \sum_{k=1}^{K} \xi^{k}_h \Delta W^{n, k}\right).
\]
Before concluding this section, we note that the discrete scheme \eqref{SALTde1} is not an energy-preserving integrator, a characteristic consistent with the continuous model \eqref{SALTe1} as previously discussed.

\section{Numerical experiments}\label{sec-3}
In this section, we will study the numerical approximation of equation \eqref{SFLTe1} using the discrete scheme  \eqref{SFLTde2}. In the numerical discussion, we will investigate the formation of peakons using different initial conditions. We will study the numerical simulation  of the newly introduced SFLT model \eqref{SFLTe1} and compare with corresponding simulations of the SALT model \eqref{SALTe1}. We implemented our numerical experiments using the Firedrake software \citep{FiredrakeUserManual}.

In the numerical simulations, we have used $50$ ensemble members to study the behaviour of the stochastic models \eqref{SFLTde1} and \eqref{SALTde1}. In the experiments, the mesh parameters are $N=5000$ and $\Delta t = 0.0025$ and the model parameter $\alpha =1$. For all numerical discussions, the spatial amplitude functions $f_k = \xi_k$ are given by:
\[
f_k(x) =  \xi_k = 
\begin{cases}
\sigma_k \sin\left( \frac{(2(k +1)+ L_d/4) \pi x}{L_d} \right), & \text{if } k \text{ is odd} \\[4pt]
\sigma_k \cos\left( \frac{(2(k +1)+ L_d/4) \pi x}{L_d} \right), & \text{if } k \text{ is even}
\end{cases}
\]
where $\sigma_k \in \{0.01, 0.02, 0.05, 0.1, 0.2\}$.

\begin{example}\label{example1}
In the first example, both stochastic models \eqref{SFLTde1} and \eqref{SALTde1} are solved numerically, starting with the same Gaussian initial velocity profile:
\[
u_0 = \dfrac{1}{2}\exp\left(\dfrac{(x-10)^2}{3^2}\right).
\]
Fig. \ref{fig:SALT_vs_SFLT_ex1} compares the SFLT and SALT ensembles at different times for this example.
Apparently, the SFLT ensemble is more representative as a stochastic spread of uncertainty in the deterministic dynamics than the SALT ensemble. The time sequence of Fig. \ref{fig:SALT_vs_SFLT_ex1} demonstrates that the lack of energy conservation in the SALT case causes the peakons to gain or lose amplitude and speed of propagation. Consequently, as shown in Fig. \ref{fig:peakSALT_vs_SFLT}, the heights of the maximum peaks in the SFLT and SALT ensembles are not identical and propagate at different speeds.
\end{example}

\begin{example}\label{example2}
In the second example, the initial condition for the velocity profile is taken as the antisymmetric peakon-antipeakon configuration,
\[
u_0(x) = \exp\left(-\dfrac{\big|x-\frac{L_d}{4}\big|}{\alpha}\right) - \exp\left(-\dfrac{\big|x-\frac{3L_d}{4}\big|}{\alpha}\right) \,.
\]
Fig. \ref{fig:SALT_vs_SFLT_ex2} compares the time sequences of the SFLT and SALT stochastic ensembles for this example of the head-on collisions produced by the peakon-antipeakon initial condition. As in the first example, the time sequences of Fig. \ref{fig:SALT_vs_SFLT_ex2} show a difference arising because the peakons in the SALT ensemble are changing their amplitude and propagation speed more dramatically than in the SFLT ensemble.
\end{example}


\begin{figure}[!htb]
    \centering

    \begin{minipage}{0.48\textwidth}
        \centering
        \includegraphics[width=.85\linewidth]{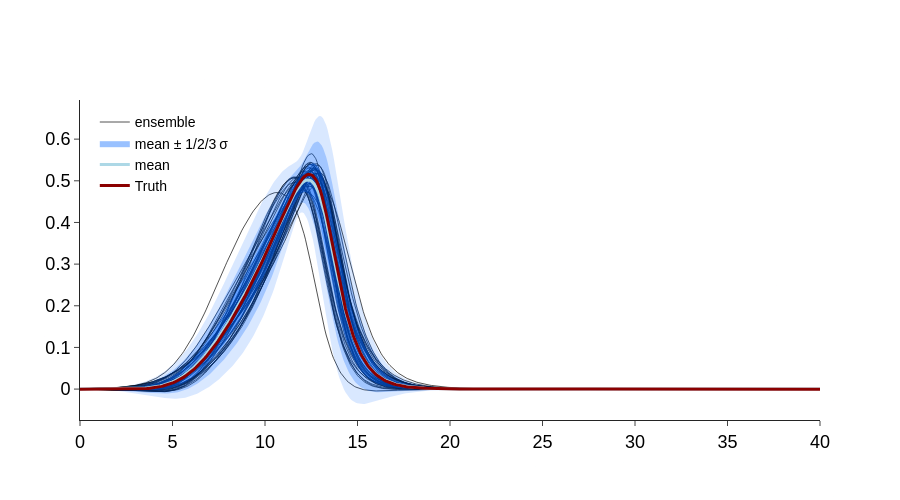}
        \subcaption*{\footnotesize SFLT: $t=2.5$}
    \end{minipage}
    \hspace{-0.95em}
    \begin{minipage}{0.48\textwidth}
        \centering
        \includegraphics[width=.85\linewidth]{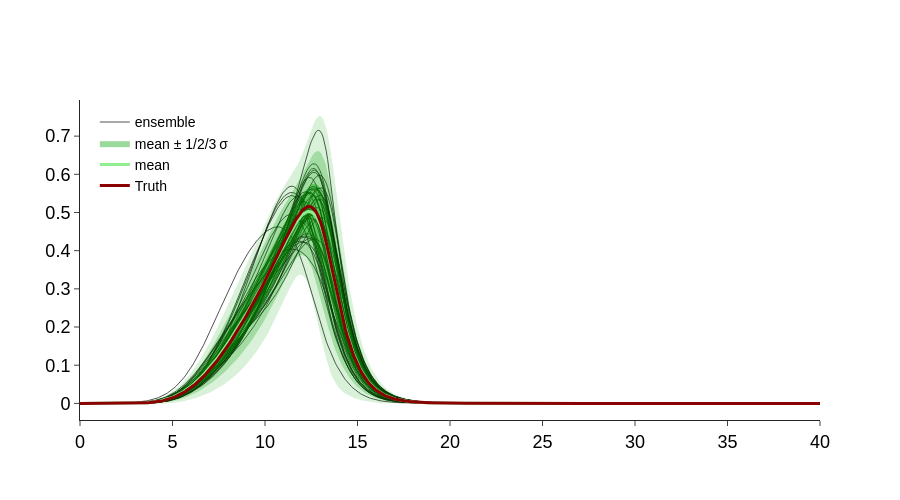}
        \subcaption*{\footnotesize SALT: $t=2.5$}
    \end{minipage}

    \vspace{-0.3em}

    \begin{minipage}{0.48\textwidth}
        \centering
        \includegraphics[width=.85\linewidth]{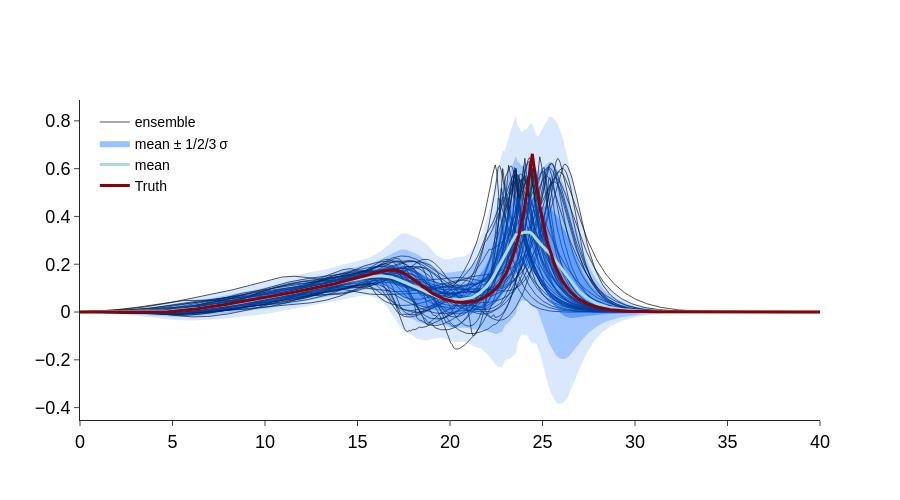}
        \subcaption*{\footnotesize SFLT: $t=20$}
    \end{minipage}
    \hspace{-0.95em}
    \begin{minipage}{0.48\textwidth}
        \centering
        \includegraphics[width=.85\linewidth]{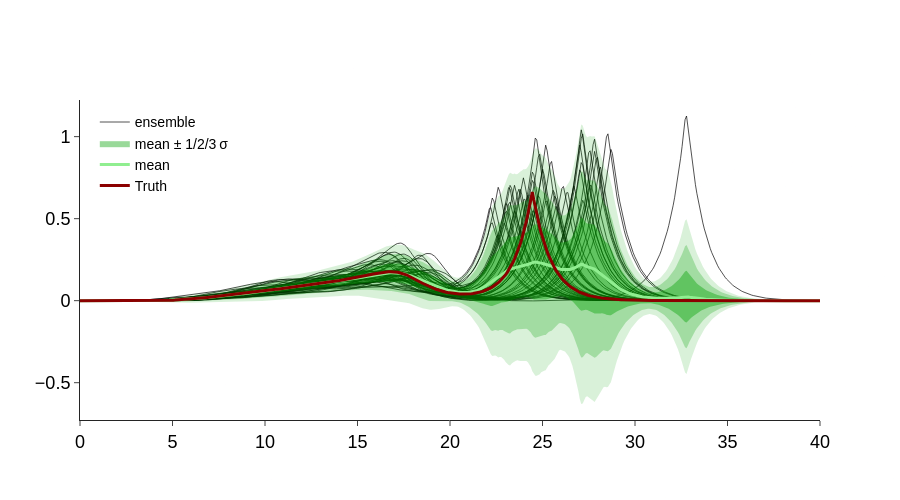}
        \subcaption*{\footnotesize SALT: $t=20$}
    \end{minipage}

    \vspace{-0.35em}

    \begin{minipage}{0.48\textwidth}
        \centering
        \includegraphics[width=.85\linewidth]{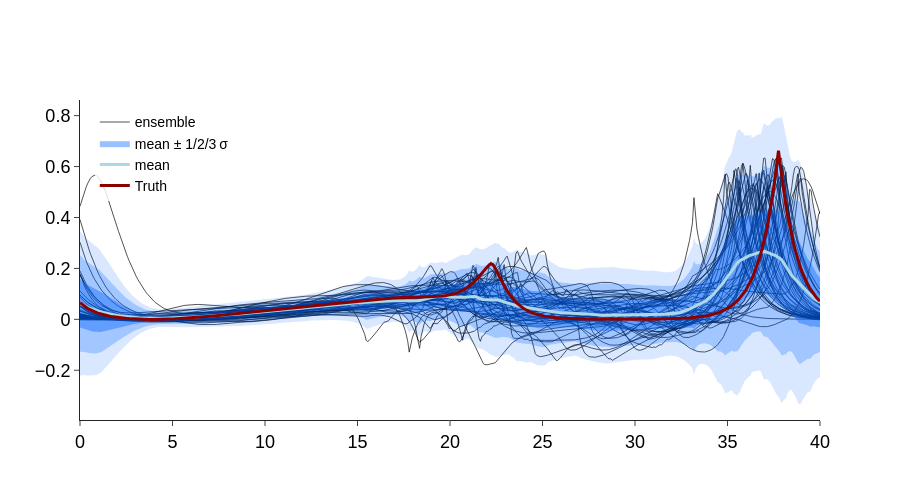}
        \subcaption*{\footnotesize SFLT: $t=40$}
    \end{minipage}
    \hspace{-0.95em}
    \begin{minipage}{0.48\textwidth}
        \centering
        \includegraphics[width=.85\linewidth]{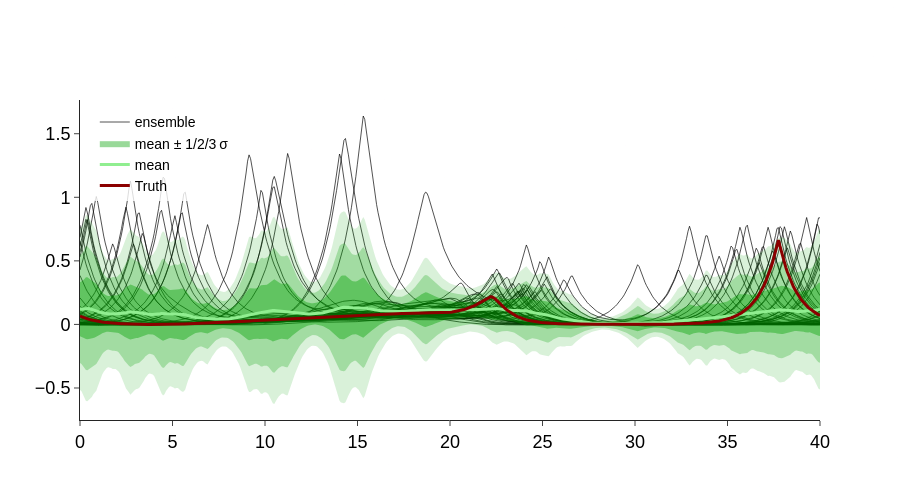}
        \subcaption*{\footnotesize SALT: $t=40$}
    \end{minipage}

    \caption{This figure compares the SFLT and SALT ensembles at different times. Left column: SFLT; Right column: SALT.}
    \label{fig:SALT_vs_SFLT_ex1}
\end{figure}


\begin{figure}[!htb]
    \centering
    \begin{minipage}{0.48\textwidth}
        \centering
        \includegraphics[width=0.85\linewidth]{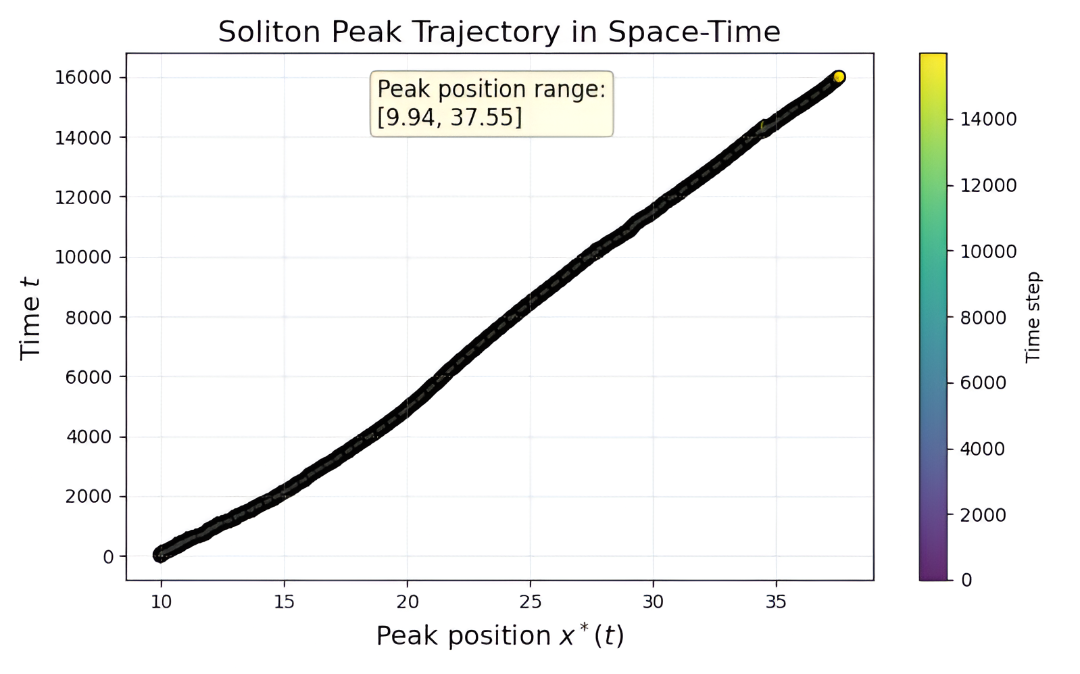}
        \subcaption*{\footnotesize SFLT}
    \end{minipage}
    \hfill
    \begin{minipage}{0.48\textwidth}
        \centering
        \includegraphics[width=0.85\linewidth]{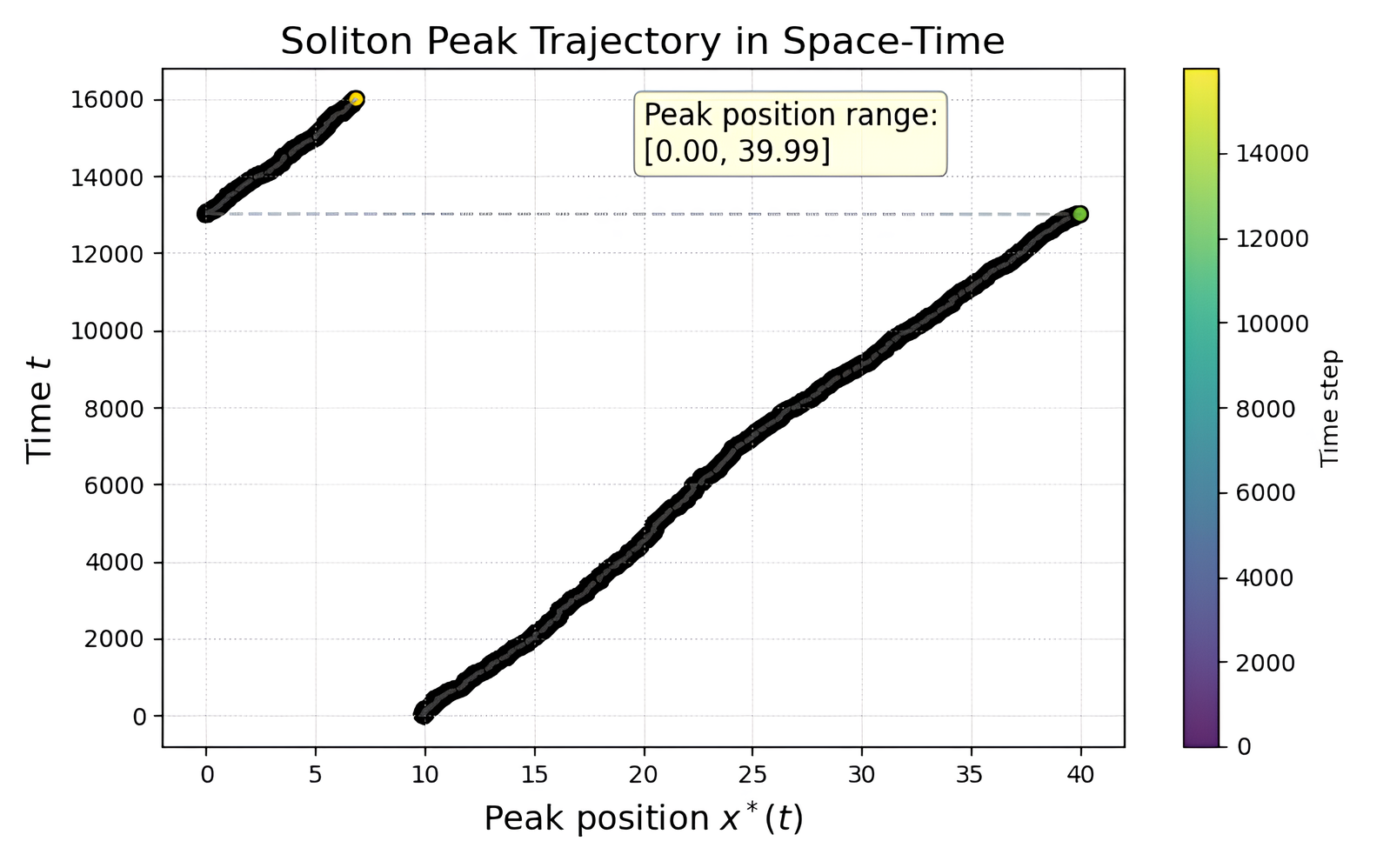}
        \subcaption*{\footnotesize SALT}
    \end{minipage}

    \caption{This figure shows that the heights of the maximum peaks in SFLT and SALT ensembles dynamics propagate at different rightward speeds.}
    \label{fig:peakSALT_vs_SFLT}
\end{figure}

\begin{figure}[!htb]
    \centering
    \vspace{-0.65em}
\begin{minipage}{0.48\textwidth}
        \centering
        \includegraphics[width=.8\linewidth]{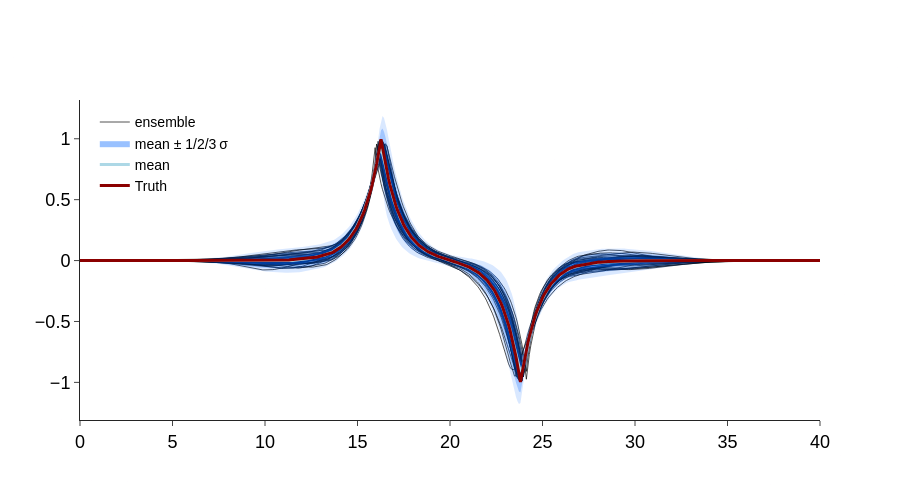}
        \subcaption*{\footnotesize SFLT: $t=2.5$}
    \end{minipage}
    \hspace{-0.95em}
    \begin{minipage}{0.48\textwidth}
        \centering
        \includegraphics[width=.8\linewidth]{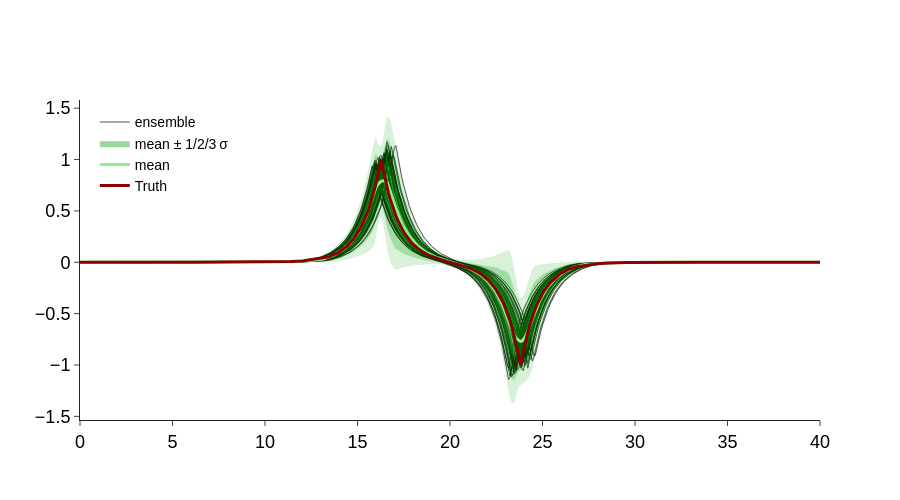}
        \subcaption*{\footnotesize SALT: $t=2.5$}
    \end{minipage}

\vspace{-0.4em}

  \begin{minipage}{0.48\textwidth}
        \centering
        \includegraphics[width=.8\linewidth]{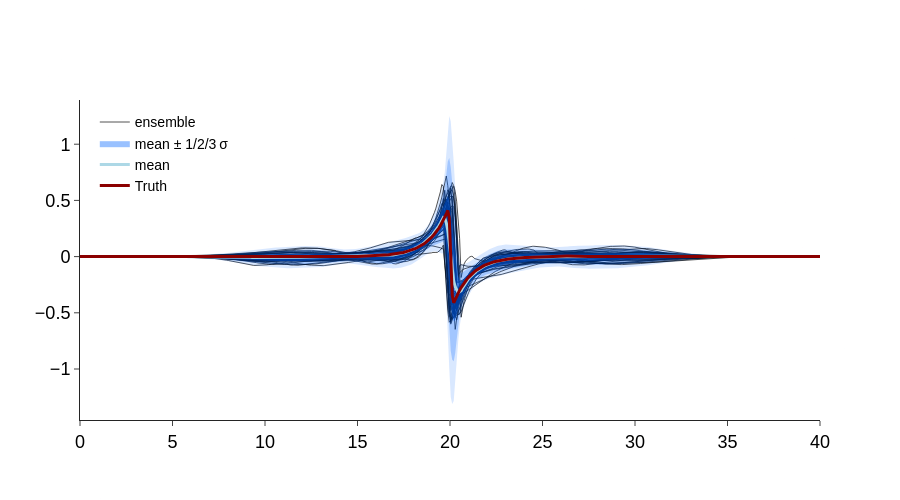}
        \subcaption*{\footnotesize SFLT: $t=10.25$}
    \end{minipage}
    \hspace{-0.95em}
    \begin{minipage}{0.48\textwidth}
        \centering
        \includegraphics[width=.8\linewidth]{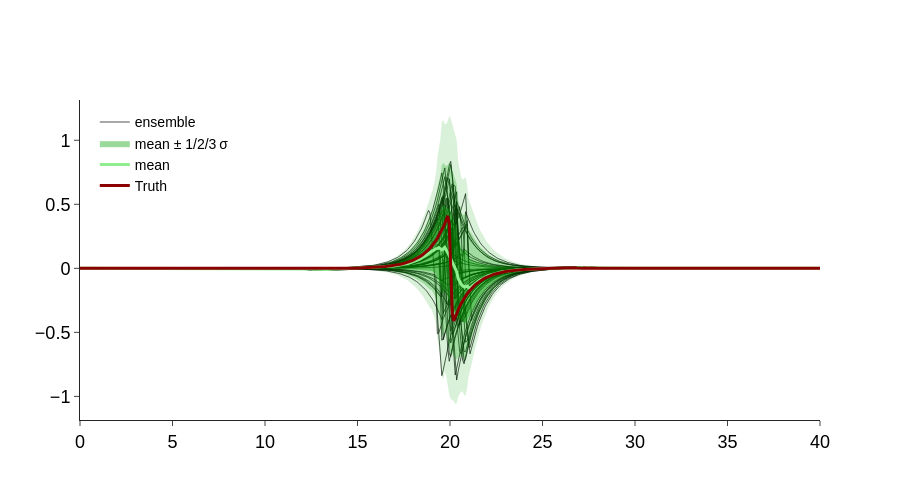}
        \subcaption*{\footnotesize SALT: $t=10.25$}
    \end{minipage}
\vspace{-0.4em}

 \begin{minipage}{0.48\textwidth}
        \centering
        \includegraphics[width=.8\linewidth]{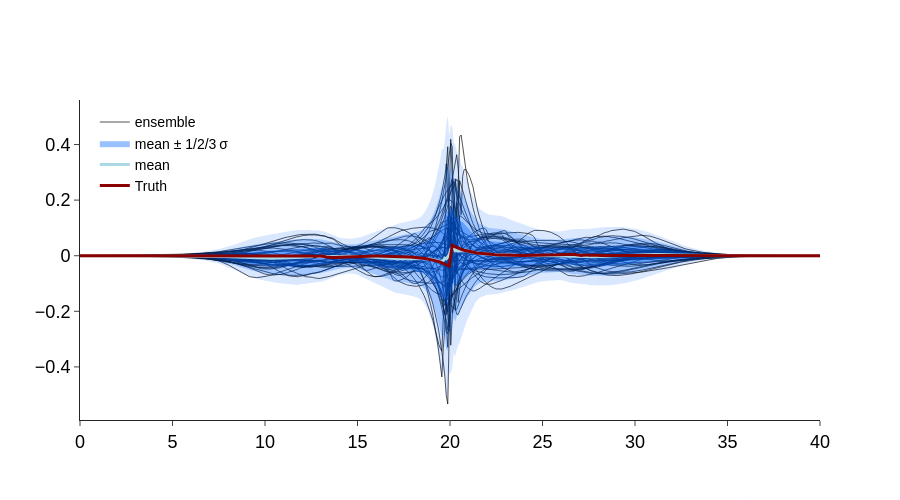}
        \subcaption*{\footnotesize SFLT: $t=10.75$}
    \end{minipage}
    \hspace{-0.95em}
    \begin{minipage}{0.48\textwidth}
        \centering
        \includegraphics[width=.8\linewidth]{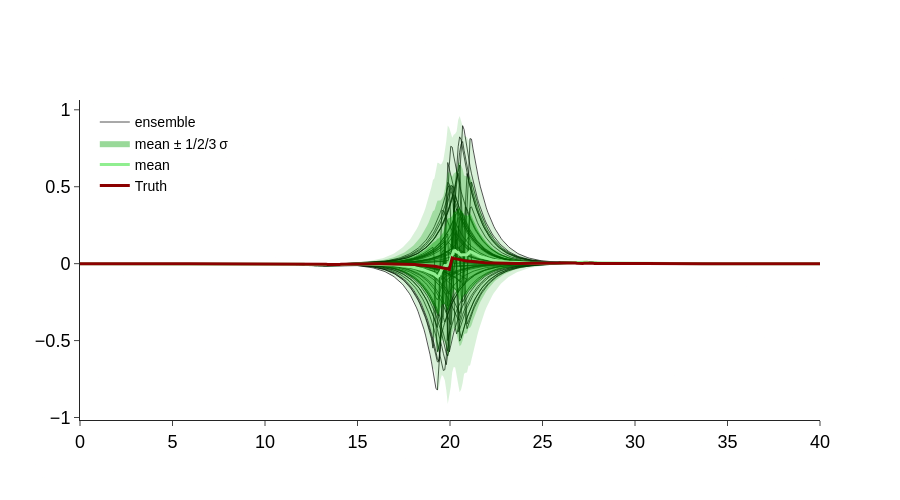}
        \subcaption*{\footnotesize SALT: $t=10.75$}
    \end{minipage}

 \vspace{-0.4em}

\begin{minipage}{0.48\textwidth}
        \centering
        \includegraphics[width=.8\linewidth]{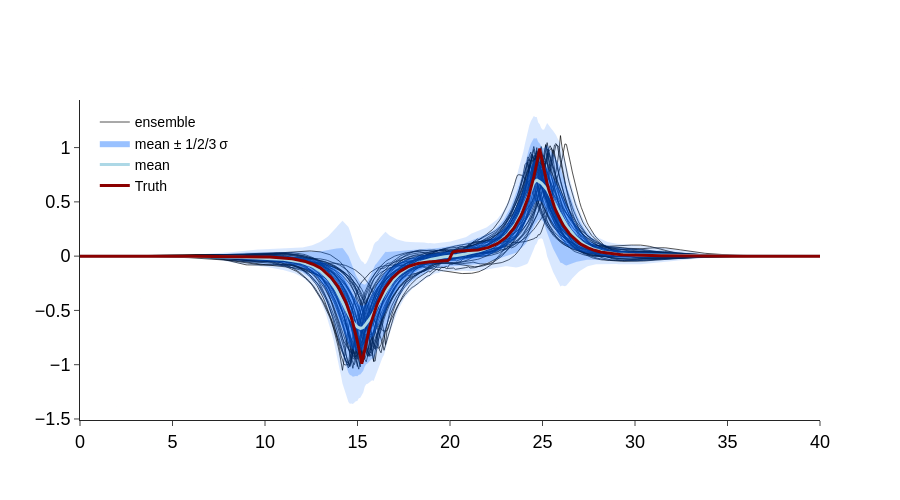}
        \subcaption*{\footnotesize SFLT: $t=16.25$}
    \end{minipage}
    \hspace{-0.95em}
    \begin{minipage}{0.48\textwidth}
        \centering
        \includegraphics[width=.8\linewidth]{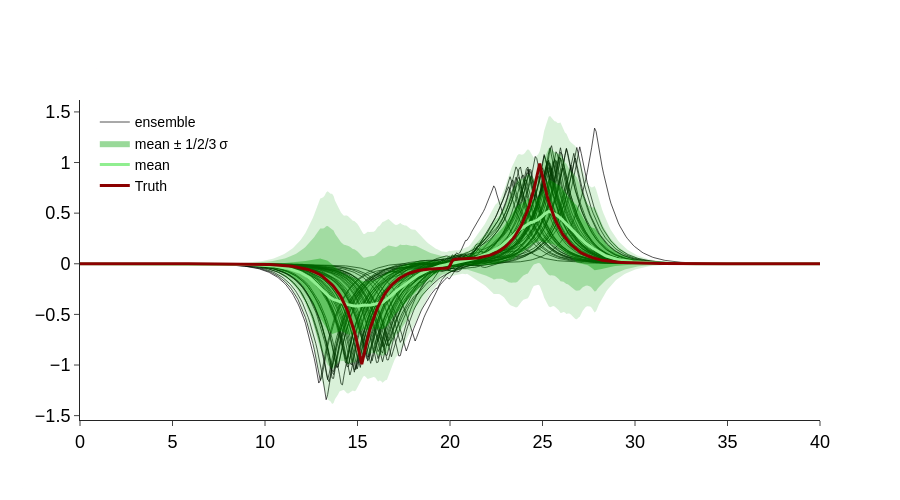}
        \subcaption*{\footnotesize SALT: $t=16.25$}
    \end{minipage}

\vspace{-0.35em}
  \begin{minipage}{0.48\textwidth}
        \centering
        \includegraphics[width=.8\linewidth]{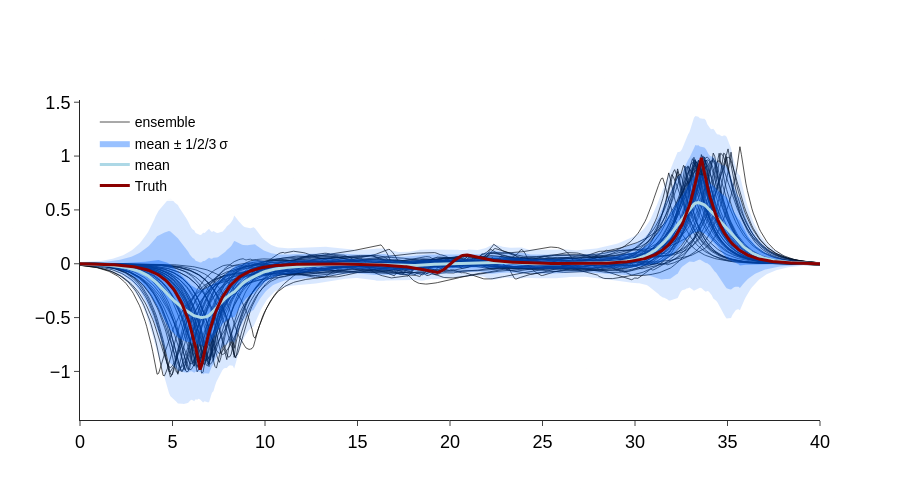}
        \subcaption*{\footnotesize SFLT: $t=25$}
    \end{minipage}
    \hspace{-0.95em}
    \begin{minipage}{0.48\textwidth}
        \centering
        \includegraphics[width=.8\linewidth]{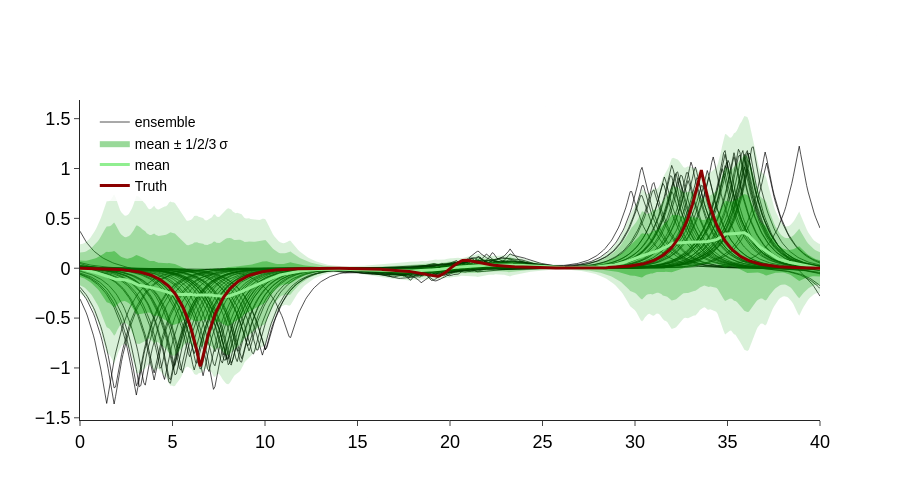}
        \subcaption*{\footnotesize SALT: $t=25$}
    \end{minipage}

    \caption{Comparison of SFLT and SALT ensembles at different times. Left column: SFLT; Right column: SALT.}
    \label{fig:SALT_vs_SFLT_ex2}
\end{figure}

Fig. \ref{energy_ex1}  and Fig. \ref{energy_ex2} show that in both of these examples the total energy is preserved by the SFLT model and is not preserved by the SALT model. To make this comparison in energy, we have quantified the energy of the stochastic models \eqref{SFLTde1} and \eqref{SALTde1} numerically by decomposing the total energy as $E^n = E^n_1 + E^n_2$, with terms $E^n_1 = \frac{1}{2}\|u\|^2_{L^2}$ and  $E^n_2 = \frac{1}{2}\|u_x\|^2_{L^2}$.  


\begin{figure}[!htb]
    \centering
   \begin{minipage}{0.47\textwidth}
        \centering
        \includegraphics[width=0.75\linewidth]{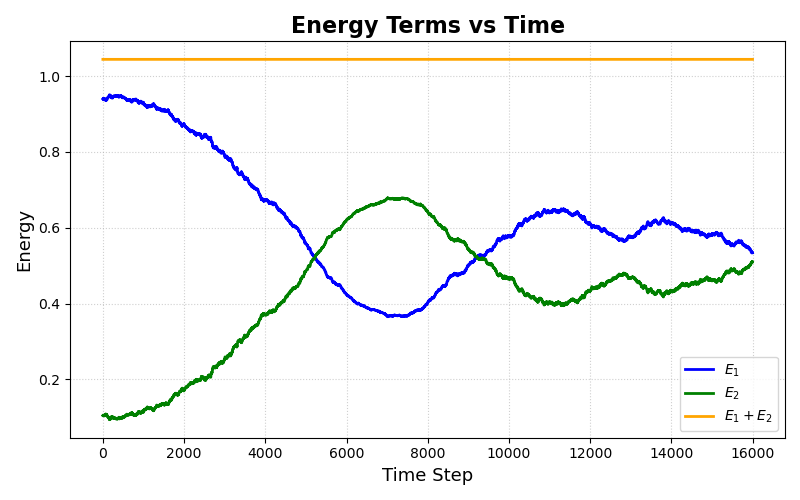}
        \subcaption{\footnotesize SFLT: Energy measure }
    \end{minipage}
    \hfill
    \begin{minipage}{0.47\textwidth}
        \centering
        \includegraphics[width=0.75\linewidth]{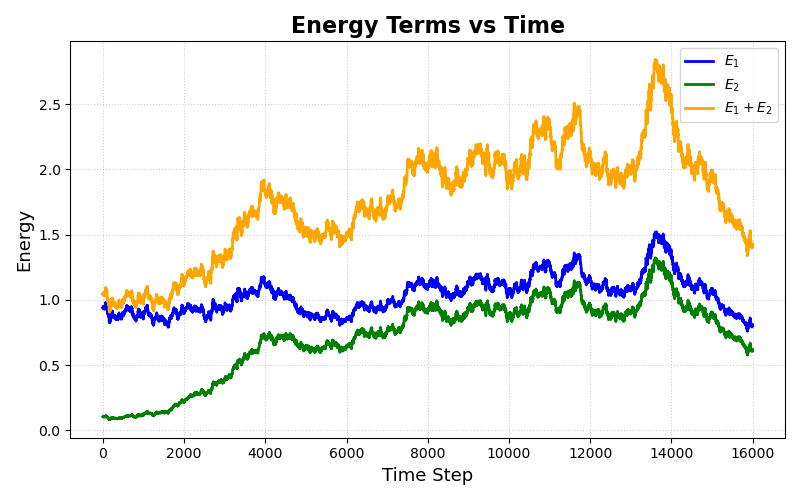}
        \subcaption{\footnotesize SALT: Energy measure}
    \end{minipage}
    \caption{Comparison of SFLT v SALT energy preservation for Example \eqref{example1} }
    \label{energy_ex1}
\end{figure}

\begin{figure}[!htb]
    \centering
   \begin{minipage}{0.47\textwidth}
        \centering
        \includegraphics[width=0.75\linewidth]{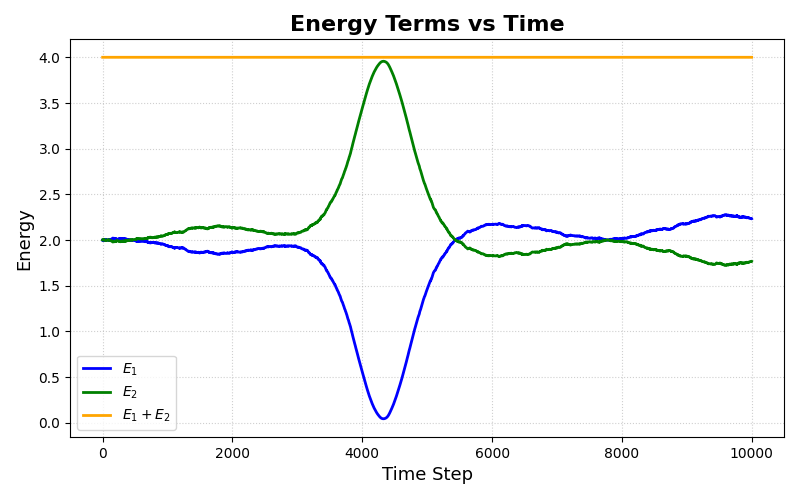}
        \subcaption{\footnotesize SFLT: Energy measure }
    \end{minipage}
    \hfill
    \begin{minipage}{0.47\textwidth}
        \centering
        \includegraphics[width=0.75\linewidth]{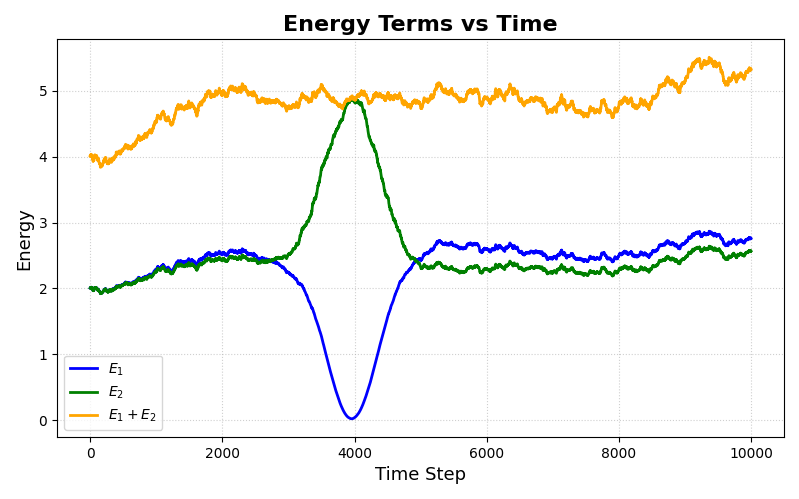}
        \subcaption{\footnotesize SALT: Energy measure}
    \end{minipage}
    \caption{Comparison of of SFLT v SALT energy preservation for Example \eqref{example2} }
    \label{energy_ex2}
\end{figure}


\section{Conclusion and Outlook}\label{sec-4}
The paper has compared two alternatives approaches for introducing stochasticity into the  deterministic CH equation, in terms of their effects on its solitary wave solutions known as `peakons'. Comparing the solution behaviour for these two choices of stochasticity has revealed the roles of the energy Hamiltonian and the Poisson operator in peakon dynamics. As mentioned in the introduction and illustrated in the article, the metric in the kinetic-energy Hamiltonian is responsible for the shape of the coherent structures in the solitary wave solution, which was also verified in previous numerical simulations \citep{fringer2001integrable}. In contrast, the Poisson operator is responsible for the \emph{creation} of these coherent structures. It can be conjectured that the difference between the conserved properties of the model in each case presented here has some effect on the emergence of peakons from smooth initial data. The fact that stochasticity does not inhibit this creation was verified theoretically in \cite{crisan2018wave} and numerically in \cite{BCH2021}. 

In this article, we see that both the SALT and SFLT cases exhibit this solution behaviour. The key difference found in this work is that the SALT perturbation of the transport velocity more dramatically influences the amplitude and speed of the peakon propagation, thus generating a wider spread in the ensemble of solutions than the energy-conserving SFLT perturbation of the momentum density. This illustrates the notion that the energy is responsible for the shape of the coherent structure and that the energy conservation of the SFLT model limits its ability to influence this feature of the dynamics. A more specific investigation about the effect of the two approaches on the \emph{generation} of peakons is a topic of ongoing research.

Fig. \ref{Figure1} shows that the choice of the peakon width $(\alpha)$ determines the physical size of a given peakon and thereby determines its effective range of interaction. This dynamical dependence on the peakon widths suggests considering an ensemble of \emph{stochastic interactions among peakons of different widths}. This type of multiscale peakon interaction has recently been introduced for the deterministic case in \cite{holm2025multiscalegeodesicflows}, inspired by a well-known rhyme attributed to L.F. Richardson's observations on turbulence \cite{richardson1922weather}. The current work on stochastic peakon dynamics suggests that an investigation of stochastic multiscale peakon interactions may also be fruitful.

Another challenging extension of the present work that lies beyond the present scope would be to make a comparison similar to the present one between SFLT and SALT for the Camassa--Holm equation in higher dimensions, also known as the Euler-alpha model  \cite{chen1998camassa,foias2001navier,foias2002three}. In particular, the issue of well-posedness appears to be surmountable for both the SFLT and SALT stochastic models of the Camassa--Holm equation in higher dimensions.

\subsection*{Acknowledgements} 
We are grateful to our friends and colleagues T. Bendall,  C. J. Cotter, D. Crisan, R. Hu, S. Takao, and J. Woodfield, for their interest, encouragement, and insightful conversations during this work. 
DH was partially supported during the present work by Office of Naval Research (ONR) grant award N00014-22-1-2082, Stochastic Parameterisation of Ocean Turbulence for Observational Networks. DH and MKS were  partially supported  by European Research Council (ERC) Synergy grant Stochastic Transport in Upper Ocean Dynamics (STUOD) -- DLV-856408. OS acknowledges funding for a research fellowship from Quadrature Climate Foundation, which has supported his contribution to this project.

\bibliographystyle{apalike}
\bibliography{main}

\end{document}